\documentclass[11pt,a4paper]{article}
\usepackage{amsmath,amssymb,amsthm}
\usepackage{booktabs}
\usepackage[margin=2.6cm]{geometry}
\usepackage[T1]{fontenc}
\usepackage{tgtermes}
\usepackage{tgheros}
\usepackage[british]{babel}
\usepackage{microtype}
\usepackage[hidelinks]{hyperref}
\hypersetup{pdftitle={Discrete energy as an exact label-free training objective for finite-element surrogates}, pdfauthor={Ruifeng Cao, Xidan Song}, pdfkeywords={finite element method; surrogate modelling; label-free training; minimum total potential energy; energy-norm conditioning; JEPA; executable falsification}}

\newtheorem{lemma}{Lemma}
\newtheorem{proposition}{Proposition}
\newtheorem{corollary}{Corollary}
\theoremstyle{remark}
\newtheorem{remark}{Remark}

\newcommand{\norm}[1]{\left\lVert #1 \right\rVert}
\newcommand{\Kff}{K_{\mathrm{ff}}}
\newcommand{\gap}{\operatorname{gap}}

\title{Discrete energy as an exact label-free training objective\\
for finite-element surrogates}
\author{Ruifeng Cao\\[2pt]
{\small The University of Manchester, Manchester, United Kingdom}\\
{\small \href{mailto:ruifeng.cao@manchester.ac.uk}{\texttt{ruifeng.cao@manchester.ac.uk}}}\\
{\small ORCID: \href{https://orcid.org/0000-0001-9077-9561}{0000-0001-9077-9561}}
\and
Xidan Song\\[2pt]
{\small Wuhan University, Wuhan, China}\\
{\small \href{mailto:liuqingdamafen@whu.edu.cn}{\texttt{liuqingdamafen@whu.edu.cn}}}\\
{\small ORCID: \href{https://orcid.org/0000-0003-2612-6296}{0000-0003-2612-6296}}}
\date{5 August 2026}

\begin{document}
\maketitle

\begin{abstract}
Supervised training of finite-element (FE) surrogate models requires
reference solutions, and each reference solution is obtained by solving the
system that the surrogate is intended to replace. The assembled discrete
potential energy provides a training signal that requires no reference
solution. This note records, with proofs, the identities that make this
signal exact for linear elastostatics: the difference between the energy of
a prediction and the energy of the reference solution equals one half of
the squared stiffness-norm error, and the gradient of the energy equals the
stiffness-weighted error. Label-free discrete-energy minimisation and
supervised regression in the stiffness norm therefore have the same unique
minimiser and identical gradients at every point. Around this central
result, the note states a conditioning lemma that bounds the displacement
error by the energy gap, a modewise contraction identity that explains why
the Euclidean displacement error is an unsuitable primary metric, the
Chebyshev bound that governs conjugate-gradient post-processing of
surrogate predictions, and a conditional latent-separation proposition for
joint-embedding predictive architecture (JEPA) pretraining on a shared
stiffness operator, with an explicit numerical counterexample that delimits
its scope. Every claim with numeric content is implemented as an executable
falsification check; the checks were executed twice, on synthetic test
problems and on a probe set of 16 instances from the validation split of a
pre-registered experimental run, and every inequality holds, with the
measured tightness reported. A closing section explains why the
construction does not extend to elastodynamics through direct minimisation
of the action functional, and which time-discrete formulation restores
exactness.

\medskip\noindent\textbf{Keywords:} finite element method; surrogate
modelling; label-free training; minimum total potential energy;
energy-norm conditioning; joint-embedding predictive
architecture (JEPA); executable falsification
\end{abstract}

\section{Introduction}

Neural surrogate models for finite-element (FE) analysis are intended to
reduce the cost of repeated solves. After training, a single network
evaluation replaces an assembly-and-solve cycle, so that design iteration,
optimisation studies, and uncertainty quantification can obtain approximate
mechanical responses at a small fraction of the original cost. The standard
training procedure, however, is supervised regression on reference
solutions, and each reference solution is computed by one full solve of
precisely the system that the surrogate is intended to replace. This
creates a circular cost structure. A training corpus of between roughly
$10^{3}$ and $10^{5}$ instances must be generated by the original solver
before the surrogate produces its first prediction; the computational cost
of this corpus must be recovered by later use of the surrogate before any
net saving is achieved; and every extension of scope, such as a new
geometry family, a new load regime, or a finer mesh, requires additional labelled data, that is, additional instances paired with reference solutions, and therefore additional solves. In practice, the supply of
labels, rather than the network architecture, is the principal constraint
on FE surrogate modelling.

For linear elastostatics this constraint is not fundamental, because the
discretised problem carries a training signal whose evaluation requires no
reference solution: the assembled discrete potential energy
$\Pi_h(u) = \tfrac12\,u^{\!\top}\Kff\,u - F^{\!\top}u$, whose setting is
made precise in Section~\ref{sec:setting}. The two ingredients of this
expression, the stiffness operator $\Kff$ and the load vector $F$, are
assembled from the mesh, the material parameters, and the load data; this
assembly is performed by every FE preprocessor before a solver is invoked.
Evaluating $\Pi_h$ at a surrogate prediction costs one sparse
matrix--vector product and two inner products. No reference solution
appears anywhere in the expression. The value $\Pi_h(U^\ast)$ at the
reference solution $U^\ast$ is unknown, but it is a constant of the problem
instance and is never required, because adding a constant to an objective
changes neither its minimisers nor its gradients. A surrogate can therefore
be trained on arbitrarily many unlabelled instances at assembly cost.
Two questions must then be answered before this signal can be trusted. The
first is what, exactly, minimising $\Pi_h$ trains the surrogate towards,
and in which norm the resulting error is controlled. The second is what
consequences the answer has for how such surrogates should be evaluated,
post-processed, used for representation learning, and extended to
dynamics. This note answers both questions on the reference discretisation itself. Throughout, exactness refers to the training objective relative to the chosen finite-element discretisation; it is not a claim about optimisation convergence, network expressivity, or generalisation to unseen instances.

The reason that $\Pi_h$ can substitute for labelled data is the
principle of minimum total potential energy, referred to below as the
minimum principle. On the discrete space, the reference solution is the unique
minimiser of $\Pi_h$. The functional therefore characterises the solution
implicitly, and decreasing the value of $\Pi_h$ decreases the distance to
the solution in a norm that Lemma~\ref{lem:one} identifies exactly.
Wherever the governing physics is described by a minimum principle,
minimising a variational energy is consequently a natural route to
training without supervision, and this route has been developed in several
directions. The Deep Ritz method \cite{EYu2018} minimises the continuous
energy functional over a neural-network trial class. The Deep Energy
Method \cite{Samaniego2020} applied the same idea to problems in
computational mechanics. The associated numerical analysis
\cite{LiaoMing2021,LiuCaiRamani2023} identifies the sources of error in
this approach: the total error decomposes into a network approximation
term together with quadrature and generalisation terms, because the
continuous functional must itself be approximated by numerical
integration. For operators trained without labels across many problem
instances, variational operator learning (VOL) \cite{XuVOL2024} operates
directly on the assembled finite-element system in a matrix-free manner
and uses the gradient identity as a computational device; its loss,
however, is the Euclidean norm of the residual, or regression to
provisional labels generated by a small number of conjugate-gradient
iterations, so that it does not minimise $\Pi_h$ itself, and it states no
equivalence or conditioning result. The variational physics-informed
neural operator (VINO) \cite{EshaghiVINO2025} evaluates the continuous
functional analytically on its own bilinear discretisation over uniform
grids, on which all element matrices are identical and no assembly is
required, treats voids by an immersed mask, and includes a mesh-refinement
study. In a separate line of work, the training difficulty of
residual-based physics losses has been attributed to operator
conditioning: the convergence rate of gradient descent is governed by an
operator associated with the Hermitian square, that is, the composition of the operator with its adjoint, of the underlying differential operator \cite{DeRyck2024}, whose condition number is the
square of that of the original operator. Underlying all of these lines is
the classical energy-norm algebra of symmetric positive definite (SPD)
systems \cite{Ciarlet1978,Saad2003}. What has been missing is the
statement, and the systematic exploitation, of its exact consequences for
surrogate training on the reference discretisation itself.

This note supplies that statement. The results are organised according to
how such a surrogate is built, evaluated, deployed, and extended, and each
technical ingredient is introduced because a specific question requires
it.

\textbf{Training.} Taking the assembled pair $(\Kff, F)$ of the reference
model as the training objective removes the cost of labels, but this step
is sound only if the resulting signal is exact rather than heuristic.
Lemma~\ref{lem:one} establishes exactness: minimising $\Pi_h$ is
supervised regression towards the solution of the reference solver, in the
norm induced by the stiffness operator, with identical gradients at every
point. Remark~\ref{rem:c2} adds an equivalent reading: the energy gap
equals the squared residual measured in the norm induced by $\Kff^{-1}$,
and the inverse cancels algebraically, which is why the loss can be
evaluated by assembly alone, without a solve and without a preconditioner.
Energy losses based on numerical quadrature of the continuous functional
cannot offer this exactness, because the quadrature error enters the
objective by construction.

\textbf{Evaluation.} Once training is label-free, evaluation cannot simply
default to the displacement error. On the probe instances of Section~\ref{sec:pass}, the condition number $\kappa$ of the stiffness
operator ranges from approximately $4.8\times10^{3}$ to
$1.5\times10^{5}$. At such condition numbers, a surrogate can be almost
consistent in energy and nevertheless exhibit a large relative Euclidean
($\ell_2$) displacement error; conversely, training supervised only on
displacement can leave large errors in the energy and therefore in the stresses, as measured in the energy norm. Lemma~\ref{lem:cond} and Corollary~\ref{cor:modes} turn this
observation into precise statements: the energy gap is the primary metric;
the displacement error is bounded by the energy gap multiplied by a factor
involving $\kappa$; and gradient descent on the energy contracts the error
in each stiffness eigenmode by exactly the factor $(1-\eta\lambda_i)$,
where $\eta$ denotes the step size and $\lambda_i$ the corresponding
eigenvalue. The same mechanism, namely that descent on a squared residual
squares the conditioning, explains anomalies reported for residual-trained
operators \cite{XuVOL2024}: smaller Euclidean residuals need not produce
smaller $L^2$ (mean-square) test errors, and direct residual minimisation performs
worst.

\textbf{Deployment.} A surrogate prediction can serve as the initial
iterate of an iterative solver. Appending $k$ conjugate-gradient (CG)
iterations on the assembled system converts the prediction into a solution
of controllable accuracy at known cost. The classical Chebyshev bound of
Section~\ref{sec:polish} bounds the effect of this post-processing step,
which this note refers to as polishing, before it is performed; the
measured margin between the bound and the observed contraction quantifies
how favourable the actual spectra are.

\textbf{Representation learning.} The same energy objective can supervise representation learning across load cases in joint-embedding predictive architecture (JEPA) pretraining \cite{LeCun2022,Assran2023}, a self-supervised paradigm in which an encoder maps each input to a latent code, that is, a low-dimensional internal representation, and a decoder maps latent codes back to predictions. A
known failure mode of such pretraining is representational collapse, that
is, the mapping of distinct inputs to nearly identical latent codes.
Whether the energy objective prevents collapse is the question of whether
accuracy provably forces latent codes apart. Proposition~\ref{prop:one}
answers this in the affirmative for load cases on a shared stiffness
operator, and an explicit, measured counterexample identifies the boundary
of validity: the analogous statement across different geometries fails, so
that separation claims across geometries require additional conditioning
information.

\textbf{Verification.} Every statement above carries numeric content, and
each is implemented as an executable check with thresholds fixed in
advance. The checks were executed twice: first on synthetic test problems
in the verification suite, and then on a probe set of 16 instances drawn from the 256-instance validation split of an experimental run whose full configuration was fixed in
advance and identified by a cryptographic hash. Under the pre-registered
protocol, a violated inequality would have required the withdrawal of the
corresponding claim. Table~\ref{tab:pass} reports the measured quantities.

\textbf{Extension.} None of the preceding results transfers directly to
elastodynamics, because Hamilton's principle is a stationarity principle
rather than a minimum principle. Section~\ref{sec:outlook} quantifies this
obstruction using the classical theory of conjugate points
\cite{GelfandFomin1963}: the time horizon on which the action functional
is convex along the discrete trajectory shrinks proportionally to the mesh
size, so that direct action minimisation becomes unsound precisely in the
refinement limit. The same section identifies the exact repair, namely an
incremental-variational time discretisation in which every implicit step
is again a convex minimisation and Lemma~\ref{lem:one} applies verbatim
step by step.

\section{Discrete elastostatic setting}\label{sec:setting}

Let a linear elastostatic problem be discretised by finite elements, and
let the degrees of freedom constrained by Dirichlet boundary conditions be
eliminated. The number of remaining free degrees of freedom is denoted by
$n$, and the reduced system reads $\Kff U^\ast = F$, where
$\Kff \in \mathbb{R}^{n\times n}$ is the stiffness operator restricted to
the free degrees of freedom (the subscript $\mathrm{ff}$ indicates the
free--free block of the full stiffness matrix), $F \in \mathbb{R}^{n}$ is
the corresponding load vector, and $U^\ast = \Kff^{-1}F$ is the reference
solution. The operator $\Kff$ is symmetric positive definite (SPD). The
assembled discrete potential energy is
\begin{equation}
\Pi_h(u) \;=\; \tfrac12\, u^{\!\top}\Kff\, u \;-\; F^{\!\top} u ,
\qquad u \in \mathbb{R}^{n}.
\label{eq:energy}
\end{equation}
For a surrogate prediction $\hat u \in \mathbb{R}^{n}$, the \emph{energy
gap} is defined as $\gap(\hat u) = \Pi_h(\hat u) - \Pi_h(U^\ast) \ge 0$,
and the stiffness norm (also called the energy norm) is defined by
$\norm{v}_K^2 = v^{\!\top}\Kff\, v$ for $v \in \mathbb{R}^{n}$. Since
$\Pi_h(U^\ast)$ is a constant of the problem instance, minimising
$\Pi_h(\hat u)$ requires no reference solve: the functional $\Pi_h$ is assembled from mesh, material, and load data alone. Note the asymmetry: the objective $\Pi_h(\hat u)$ is computable without a reference solve, whereas the energy gap differs from it by the unknown constant $\Pi_h(U^\ast)$ and is therefore an analysis and evaluation quantity; computing it, as in Section~\ref{sec:pass}, requires the reference solution. Throughout,
$\lambda_{\min}$ and $\lambda_{\max}$ denote the smallest and largest
eigenvalues of $\Kff$, and $\kappa = \lambda_{\max}/\lambda_{\min}$
denotes its spectral condition number. Sufficient essential boundary conditions are assumed throughout: if rigid-body modes are not eliminated, $\Kff$ is singular and the identities below must be restated on the corresponding quotient space.

\section{Exact equivalence of label-free and supervised energy-norm training}

\begin{lemma}[Label-free exactness]\label{lem:one}
For all $u \in \mathbb{R}^n$,
\begin{equation}
\Pi_h(u) - \Pi_h(U^\ast) \;=\; \tfrac12 \norm{u - U^\ast}_K^{2},
\qquad
\nabla \Pi_h(u) \;=\; \Kff\,(u - U^\ast).
\label{eq:identities}
\end{equation}
Consequently, the label-free objective $\Pi_h$ and the supervised objective $\tfrac12\norm{u-U^\ast}_K^2$ pose the same minimisation problem: both have the unique minimiser $U^\ast$, their values differ by
the constant $\Pi_h(U^\ast)$, and their gradients coincide at every point.
\end{lemma}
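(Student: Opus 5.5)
The plan is to substitute $F = \Kff U^\ast$ into \eqref{eq:energy} and complete the square, using only the symmetry of $\Kff$. For any $u$ and with $e = u - U^\ast$, the substitution gives $\Pi_h(u) = \tfrac12 u^{\top}\Kff u - U^{\ast\top}\Kff u$. Expanding
\[
\tfrac12 (u-U^\ast)^{\top}\Kff (u-U^\ast) = \tfrac12 u^{\top}\Kff u - U^{\ast\top}\Kff u + \tfrac12 U^{\ast\top}\Kff U^\ast ,
\]
where the two cross terms are merged by $\Kff = \Kff^{\top}$, shows that $\Pi_h(u) = \tfrac12\norm{e}_K^2 - \tfrac12 U^{\ast\top}\Kff U^\ast$. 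Evaluating this at $u = U^\ast$ identifies the constant: $\Pi_h(U^\ast) = -\tfrac12 U^{\ast\top}\Kff U^\ast$. Subtracting then gives the first identity in \eqref{eq:identities}.

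For the gradient, I will differentiate \eqref{eq:energy} directly. Since $\Kff$ is symmetric, the gradient of $\tfrac12 u^{\top}\Kff u$ is $\tfrac12(\Kff + \Kff^{\top})u = \Kff u$. Hence $\nabla\Pi_h(u) = \Kff u - F = \Kff u - \Kff U^\ast = \Kff(u - U^\ast)$. As a consistency check, differentiating the completed-square form $\tfrac12\norm{u-U^\ast}_K^2$ gives the same expression.

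For the consequences, the first identity says $\Pi_h(u)$ and $\tfrac12\norm{u-U^\ast}_K^2$ differ by the $u$-independent constant $\Pi_h(U^\ast)$. Their gradients therefore coincide everywhere, which also follows directly from the gradient identity, and they have the same minimisers. For uniqueness, positive definiteness of $\Kff$ gives $\norm{v}_K^2 > 0$ for every $v \neq 0$. So $\tfrac12\norm{u-U^\ast}_K^2 \ge 0$, with equality exactly when $u = U^\ast$, and $U^\ast$ is the unique minimiser of both objectives. This also justifies the claim $\gap(\hat u)\ge 0$ made in Section~\ref{sec:setting}.

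No step is a genuine obstacle; the argument is pure algebra. The only points that need care are these:
\begin{itemize}
\item Symmetry of $\Kff$ is used twice: to merge the cross terms and to simplify the gradient of the quadratic form.
\item Positive definiteness, rather than mere semidefiniteness, is what gives uniqueness. This is where the standing assumption of sufficient essential boundary conditions enters.
\end{itemize}
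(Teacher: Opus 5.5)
Your proposal is correct and follows essentially the same route as the paper: expand $\tfrac12\norm{u-U^\ast}_K^2$, substitute $\Kff U^\ast = F$, compare with \eqref{eq:energy}, and differentiate. You also state explicitly where symmetry is used and that positive definiteness gives uniqueness of the minimiser; the paper's two-line proof leaves both points implicit.
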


\begin{proof}
Expand $\tfrac12\norm{u-U^\ast}_K^2
= \tfrac12 u^{\!\top}\Kff u - u^{\!\top}\Kff U^\ast
+ \tfrac12 (U^\ast)^{\!\top}\Kff U^\ast$ and substitute
$\Kff U^\ast = F$. The first identity follows by comparison with
\eqref{eq:energy}, and differentiation gives the second.
\end{proof}

\begin{remark}[Verification]
Both identities in \eqref{eq:identities} are checked in the assembly-level
verification suite on assembled instances and hold to relative error
$\lesssim 10^{-11}$ in double precision.
\end{remark}

\begin{remark}[The energy gap equals the dual-norm residual]\label{rem:c2}
Let $r(u) = F - \Kff u$ denote the residual, and define the
$\Kff^{-1}$-norm by $\norm{r}_{K^{-1}}^2 = r^{\!\top}\Kff^{-1} r$. By
\eqref{eq:identities},
\[
\gap(u) \;=\; \tfrac12\,\norm{u - U^\ast}_K^2
\;=\; \tfrac12\, r(u)^{\!\top} \Kff^{-1}\, r(u)
\;=\; \tfrac12 \norm{r(u)}_{K^{-1}}^{2}.
\]
The energy gap therefore equals the residual measured in the
$\Kff^{-1}$-norm, which is the operator norm in which smallness of the
residual is equivalent to accuracy of the solution. The inverse cancels
algebraically, since $\Kff \Kff^{-1} \Kff = \Kff$, and this cancellation
is the reason why the loss is evaluable by assembly alone, without a solve
and without a preconditioner. The same observation identifies the boundary
of the construction: the cancellation is a property of the static,
self-adjoint, coercive operator. Section~\ref{sec:outlook} discusses what
changes for time-dependent operators.
\end{remark}

\section{Conditioning, modewise contraction, and the choice of metric}

\begin{lemma}[Conditioning bound]\label{lem:cond}
For all $u \in \mathbb{R}^{n}$,
\begin{equation}
\norm{u - U^\ast}_2 \;\le\; \sqrt{\frac{2\,\gap(u)}{\lambda_{\min}(\Kff)}},
\qquad\text{equivalently}\qquad
\frac{\norm{u-U^\ast}_2^2}{\norm{U^\ast}_2^2}
\;\le\; \kappa \cdot
\frac{\norm{u-U^\ast}_K^2}{\norm{U^\ast}_K^2}.
\label{eq:cond}
\end{equation}
\end{lemma}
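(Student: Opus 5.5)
The plan is to combine Lemma~\ref{lem:one} with the Rayleigh-quotient bounds for the SPD operator $\Kff$. Put $e = u - U^\ast$. Since $\Kff$ is symmetric, it has an orthonormal eigenbasis with eigenvalues in $[\lambda_{\min},\lambda_{\max}]$. Expanding any $v$ in this basis gives the two-sided estimate
\[
\lambda_{\min}\norm{v}_2^2 \;\le\; \norm{v}_K^2 \;\le\; \lambda_{\max}\norm{v}_2^2
\qquad\text{for all } v\in\mathbb{R}^n .
\]

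For the first inequality, apply the lower bound with $v=e$ to get $\norm{e}_2^2 \le \norm{e}_K^2/\lambda_{\min}$. Lemma~\ref{lem:one} gives $\norm{e}_K^2 = 2\,\gap(u)$. Substituting this and taking square roots yields $\norm{e}_2 \le \sqrt{2\gap(u)/\lambda_{\min}}$.

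For the relative form, bound the two quotients separately. The numerator satisfies $\norm{e}_2^2 \le \norm{e}_K^2/\lambda_{\min}$ as above. The denominator satisfies $\norm{U^\ast}_2^2 \ge \norm{U^\ast}_K^2/\lambda_{\max}$, from the upper bound with $v=U^\ast$. Dividing gives
\[
\frac{\norm{e}_2^2}{\norm{U^\ast}_2^2}\le \frac{\lambda_{\max}}{\lambda_{\min}}\,\frac{\norm{e}_K^2}{\norm{U^\ast}_K^2},
\]
which is the stated inequality with $\kappa=\lambda_{\max}/\lambda_{\min}$. This step needs $U^\ast\neq 0$, i.e.\ $F\neq0$; if $F=0$ the relative form is vacuous, and I will say so explicitly.

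There is no real obstacle. The only point needing care is the word ``equivalently'' in the statement. The two forms are not literally equivalent. The second follows from the first combined with the upper Rayleigh bound on $U^\ast$. The first is the absolute version of the numerator estimate that the second relies on. I will present the second as a consequence of the first together with $\norm{U^\ast}_K^2\le\lambda_{\max}\norm{U^\ast}_2^2$, rather than claim a strict logical equivalence.
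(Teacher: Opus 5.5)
Your proposal is correct and follows the paper's proof: the lower Rayleigh bound $\lambda_{\min}\norm{e}_2^2 \le \norm{e}_K^2 = 2\gap(u)$ from Lemma~\ref{lem:one} gives the first form, and combining it with $\norm{U^\ast}_K^2 \le \lambda_{\max}\norm{U^\ast}_2^2$ gives the second. Your two caveats are also sound: the relative form needs $F \neq 0$, and it is strictly weaker than the absolute form (it follows from it, but not conversely), so ``equivalently'' in the statement is loose.
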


\begin{proof}
The inequality $2\gap(u) = \norm{u-U^\ast}_K^2 \ge
\lambda_{\min}\norm{u-U^\ast}_2^2$ gives the first form. Combining it with
$\norm{U^\ast}_K^2 \le \lambda_{\max}\norm{U^\ast}_2^2$ gives the second.
\end{proof}

\begin{corollary}[Modewise contraction]\label{cor:modes}
Let $\Kff = \sum_i \lambda_i\, v_i v_i^{\!\top}$ be the eigendecomposition
of the stiffness operator, and let $u^{+} = u - \eta\,\nabla\Pi_h(u)$ be
one step of gradient descent with step size $\eta > 0$. Then the error
component $e_i = v_i^{\!\top}(u - U^\ast)$ in the $i$-th eigenmode evolves
as
\begin{equation}
e_i^{+} \;=\; (1 - \eta\,\lambda_i)\, e_i
\qquad\text{exactly, for every mode } i.
\label{eq:modes}
\end{equation}
\end{corollary}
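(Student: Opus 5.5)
The plan is to work entirely in error coordinates and to use the gradient identity of Lemma~\ref{lem:one} as the single structural input. Set $e = u - U^\ast$ and $e^{+} = u^{+} - U^\ast$. Subtracting $U^\ast$ from both sides of the update $u^{+} = u - \eta\,\nabla\Pi_h(u)$, and substituting $\nabla\Pi_h(u) = \Kff(u - U^\ast)$ from \eqref{eq:identities}, gives the affine error recursion
\[
e^{+} = (I - \eta\,\Kff)\,e .
\]
This step uses only $\Kff U^\ast = F$, which is already built into Lemma~\ref{lem:one}. No reference solve enters, because $U^\ast$ appears only through the change of variables.

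Next, I will diagonalise the recursion. Since $\Kff$ is symmetric positive definite, the spectral theorem provides an orthonormal eigenbasis $\{v_i\}$ with $\Kff = \sum_i \lambda_i v_i v_i^{\top}$, and therefore $I - \eta\Kff = \sum_i (1-\eta\lambda_i) v_i v_i^{\top}$. Taking the inner product of the recursion with $v_i$, and using symmetry so that $v_i^{\top}\Kff = \lambda_i v_i^{\top}$, gives
\[
v_i^{\top} e^{+} = v_i^{\top} e - \eta\lambda_i\, v_i^{\top} e ,
\]
which is exactly \eqref{eq:modes}.

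There is no genuine obstacle here; the statement is a direct corollary. The one point that needs care is that the eigenvectors are orthonormal, so that the modal coordinates $e_i$ decouple and no cross terms appear. This is guaranteed by the symmetry of $\Kff$. If $\Kff$ has repeated eigenvalues, I will fix any orthonormal eigenbasis; the identity holds for each basis vector regardless of that choice.

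I will also note, as a consequence rather than part of the proof, what the identity implies for the choice of metric. For $0<\eta<2/\lambda_{\max}$ every mode contracts. With the best uniform step, however, the slowest factor is $1-\lambda_{\min}/\lambda_{\max} = 1 - 1/\kappa$, so the low-stiffness modes decay most slowly. These are exactly the modes that dominate the Euclidean error in Lemma~\ref{lem:cond} while contributing little to the energy gap.
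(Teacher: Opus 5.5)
Your proof is correct and is the paper's argument: the paper also derives the result directly from the gradient identity $\nabla\Pi_h(u)=\Kff(u-U^\ast)$ of Lemma~\ref{lem:one}, which gives $e^{+}=(I-\eta\Kff)e$, followed by projection onto the orthonormal eigenvectors of the symmetric $\Kff$. One small slip in your closing aside, which is not part of the proof: the best uniform step $\eta=2/(\lambda_{\min}+\lambda_{\max})$ gives the worst-mode factor $(\kappa-1)/(\kappa+1)$, attained in modulus by both the lowest and highest modes, whereas $1-1/\kappa$ is the slowest factor for the step $\eta=1/\lambda_{\max}$.
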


\begin{proof}
Immediate from $\nabla\Pi_h(u) = \Kff(u-U^\ast)$ and diagonalisation.
\end{proof}

\begin{remark}[Choice of the primary metric]
By \eqref{eq:modes}, energy descent converges most slowly on the modes
with small eigenvalues, which are the smooth, low-frequency modes, and
these modes dominate the Euclidean displacement norm. The energy norm
weights the error in mode $i$ by $\lambda_i$ and is therefore the norm in
which progress is uniform across modes. As a consequence, a surrogate can
be nearly consistent in energy while exhibiting a substantial relative
$\ell_2$ displacement error, and training supervised only on displacement
can produce predictions whose energy gap, and therefore whose energy-norm stress error, is large. This explains the empirical observation recorded in Section~\ref{sec:pass}: at comparable displacement error, supervision on 1024 labelled instances attains a relative energy gap 5.1 times larger than label-free energy training.
\end{remark}

\section{Certified post-processing by conjugate gradients}
\label{sec:polish}

A surrogate prediction can be used as the initial iterate of the conjugate
gradient (CG) method applied to the assembled system. Appending $k$ CG
iterations is a post-processing step, referred to in this note as
polishing, which upgrades the prediction to a solution of controllable
accuracy at the cost of $k$ sparse matrix--vector products. Its effect on
the energy gap obeys the classical bound \cite{Saad2003}
\begin{equation}
\frac{\gap_k}{\gap_0} \;\le\; 4\,\rho^{2k},
\qquad
\rho = \frac{\sqrt{\kappa}-1}{\sqrt{\kappa}+1},
\label{eq:cheb}
\end{equation}
where $\gap_0$ denotes the energy gap of the prediction, $\gap_k$ the
energy gap after $k$ CG iterations initialised at the prediction, and
$\rho$ the Chebyshev contraction factor determined by the condition number
$\kappa$. Since $\gap = \tfrac12\norm{\cdot}_K^2$, the bound
\eqref{eq:cheb} is the square of the standard $K$-norm estimate for CG.
The bound is checked as an inequality on measured ratios
$\gap_k/\gap_0$. The margin between the measured ratios and the bound,
reported in Table~\ref{tab:pass}, shows that on the spectra of the
validation instances the observed contraction is considerably faster than
the worst case.

\section{JEPA representation learning: latent separation and its scope}\label{sec:prop1}

The role of the energy objective in preventing representational collapse in JEPA pretraining rests on the following geometric fact: distinct loads on a shared operator
have well-separated solutions, and any decoder that is simultaneously
accurate on all of them must therefore assign them distinct latent codes.
The next proposition makes this quantitative.

\begin{proposition}[Separation on a shared operator]\label{prop:one}
Fix one geometry, and hence one stiffness operator $\Kff$, and consider
loads $F_1,\dots,F_L$ with solutions $U_1^\ast,\dots,U_L^\ast$. Let
$z_1,\dots,z_L$ be latent codes and let the decoder $D$, that is, the map from the latent space to $\mathbb{R}^{n}$, be $L_D$-Lipschitz with respect to $\norm{\cdot}_K$, in the sense that
$\norm{D(z) - D(z')}_K \le L_D \norm{z - z'}$ for all latent codes
$z, z'$. If $\norm{D(z_i) - U_i^\ast}_K \le \varepsilon$ for all $i$,
which is equivalent to $\gap(D(z_i)) \le \varepsilon^2/2$, then for all
$i \ne j$,
\begin{equation}
\norm{z_i - z_j\,} \;\ge\;
\frac{\norm{U_i^\ast - U_j^\ast}_K - 2\varepsilon}{L_D}.
\label{eq:prop1}
\end{equation}
\end{proposition}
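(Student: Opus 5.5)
The plan is a triangle inequality in the stiffness norm, combined with the Lipschitz property of the decoder. There are two preliminary points to settle.

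First, the equivalence stated in the hypothesis follows directly from Lemma~\ref{lem:one}. Since $\gap(D(z_i)) = \tfrac12\norm{D(z_i)-U_i^\ast}_K^2$, the bound $\norm{D(z_i)-U_i^\ast}_K\le\varepsilon$ holds exactly when $\gap(D(z_i))\le\varepsilon^2/2$. Both quantities are nonnegative, so taking square roots is monotone.

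Second, $\norm{\cdot}_K$ is a genuine norm on $\mathbb{R}^n$, because $\Kff$ is SPD. In particular, it satisfies the triangle inequality.

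For fixed $i\ne j$, I will decompose
\[
U_i^\ast - U_j^\ast = \bigl(U_i^\ast - D(z_i)\bigr) + \bigl(D(z_i) - D(z_j)\bigr) + \bigl(D(z_j) - U_j^\ast\bigr).
\]
Applying the triangle inequality in $\norm{\cdot}_K$ and bounding the two outer terms by the accuracy hypothesis gives
\[
\norm{U_i^\ast - U_j^\ast}_K \le 2\varepsilon + \norm{D(z_i)-D(z_j)}_K .
\]
The Lipschitz assumption then bounds the middle term by $L_D\norm{z_i-z_j}$. Rearranging and dividing by $L_D>0$ yields \eqref{eq:prop1}.

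I would add remarks on two edge cases. If the right-hand side of \eqref{eq:prop1} is nonpositive, the bound holds trivially. If $L_D=0$, then $D$ is constant, the same decomposition forces $\norm{U_i^\ast-U_j^\ast}_K\le 2\varepsilon$, and the statement is read under the convention $L_D>0$.

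There is no real obstacle here; the argument is a few lines. The one point to be careful about is that the Lipschitz constant and the accuracy hypothesis are measured in the same norm $\norm{\cdot}_K$. This shared norm is what lets the terms combine, and it is where the shared operator matters.

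I would also note why the proposition needs the shared operator. If different geometries carry different $\Kff$, the norms in the three terms no longer coincide. Comparing them would then introduce factors depending on the spectra, which anticipates the counterexample.
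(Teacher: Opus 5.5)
Your proposal is correct and follows the paper's own argument: the three-term decomposition with the triangle inequality in $\norm{\cdot}_K$, the outer terms bounded by $\varepsilon$ and the middle term by $L_D\norm{z_i-z_j}$. Your added remarks, the gap equivalence via Lemma~\ref{lem:one} and the convention $L_D>0$, are sound points that the paper leaves implicit.
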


\begin{proof}
By the triangle inequality,
\[
\norm{U_i^\ast-U_j^\ast}_K \le \norm{U_i^\ast - D(z_i)}_K
+ \norm{D(z_i) - D(z_j)}_K + \norm{D(z_j) - U_j^\ast}_K
\le 2\varepsilon + L_D\norm{z_i-z_j}.
\]
\end{proof}

The bound \eqref{eq:prop1} is informative when $2\varepsilon < \norm{U_i^\ast - U_j^\ast}_K$; otherwise its right-hand side is non-positive and the inequality holds vacuously.

The premise of the proposition is that the targets themselves are
separated. The quantity measured for this premise is the minimum pairwise
distance $\norm{U_i^\ast - U_j^\ast}_K$ over the set of load cases, normalised
by the mean $K$-norm of the solutions, computed per geometry (Table~\ref{tab:pass}). Proposition~\ref{prop:one} is a conditional separation statement: it asserts that accurate decoding forces separated latent codes. It does not by itself guarantee that a given training procedure attains the required accuracy, and it does not cover latent-only objectives that use no decoder.

\paragraph{Scope limit, by counterexample.}
The statement is intrinsically restricted to a single operator: for two
different meshes, the norm on the right-hand side of \eqref{eq:prop1} is
not defined across geometries. A natural question is whether a
descriptor-free analogue holds across geometries in a suitably transported
metric. Empirically it does not. Define the direction metric
$d(A,B) = \norm{\tilde u_A - \tilde u_B}_2$, where $\tilde u$ denotes the
solution under a common load, normalised to unit Euclidean norm and
interpolated between the two meshes through coordinates normalised to a common reference domain. An explicit search over the 16 probe geometries of Section~\ref{sec:pass} finds a pair of
\emph{different} geometries whose solutions under the same load are closer
to each other, with $d_{\mathrm{cross}} = 8.39\times10^{-3}$, than the
minimum within-geometry separation between load cases on the same scale,
which is $7.56\times10^{-2}$. The descriptor-free extension across
geometries is therefore refuted. The counterexample is the intended
outcome of this check: it identifies the correct scope of the statement,
namely that separation claims across geometries must be made conditional
on geometry descriptors.

\section{The executable falsification protocol}\label{sec:pass}

Every claim above with numeric content is implemented as an executable
check and evaluated on assembled instances. Under the pre-registered
protocol, a violated inequality would have required the withdrawal of the
corresponding claim. The protocol was executed twice with identical qualitative outcomes: first on synthetic test problems in the
verification suite, and then on a probe set of 16 instances drawn from the 256-instance validation split of a pre-registered experimental run (16 July 2026) whose full
configuration was fixed in advance and identified by a cryptographic
hash. Table~\ref{tab:pass} reports the second, corpus-level measurement.

\begin{table}[ht]
\centering
\footnotesize
\begin{tabular}{@{}llll@{}}
\toprule
Claim & Checked inequality / quantity & Measured & Verdict \\
\midrule
Lemma~\ref{lem:one} & identities \eqref{eq:identities}, relative error &
$\lesssim 10^{-11}$ & holds \\
Lemma~\ref{lem:cond} & tightness $\norm{e}_2 \big/ \sqrt{2\gap/\lambda_{\min}}$,
max over probes & $0.0224$ & holds \\
 & condition-number range across instances & $\kappa \in [4.83\times10^{3},\,
1.54\times10^{5}]$ & --- \\
Cor.~\ref{cor:modes} & max deviation from \eqref{eq:modes} &
$2.11\times10^{-15}$ & exact \\
Bound \eqref{eq:cheb} & worst measured\,/\,bound over $k \in \{1,3,5,10\}$ &
$0.043$ & holds \\
Prop.~\ref{prop:one} premise & min within-geometry separation
(median; 16 geometries) & $0.464$ ($0.583$) & holds \\
Scope limit & cross-geometry counterexample vs within minimum &
$8.39\times10^{-3} < 7.56\times10^{-2}$ & refuted$^\dagger$ \\
\bottomrule
\end{tabular}
\caption{The falsification protocol on the corpus probe set (16 instances from the validation split). The entry
marked $^\dagger$ refers to the descriptor-free cross-geometry extension of
Proposition~\ref{prop:one}; its refutation is the intended outcome of the
check and selects the descriptor-conditioned form of the statement.}
\label{tab:pass}
\end{table}

\paragraph{Empirical corroboration.}
The following result provides context and is not a claim of this note. At
full scale, training on $\Pi_h$ alone, with zero reference solves, reached relative $\ell_2$ displacement parity with supervision on 1024 labelled instances, evaluated on the full 256-instance validation split (errors $0.1658$ against $0.1600$, nine random seeds across two runs), with a relative energy gap smaller by a factor of $5.1$. This is
consistent with Lemma~\ref{lem:one} and with the reading of the metrics
given by Corollary~\ref{cor:modes}.

\paragraph{Code and data availability.}
The verification suite, the stamped pre-registration documents, the experiment configurations, and the mesh-generation pipeline that reproduces the corpus are available at \url{https://github.com/HymnOfLight/FE-JEPA}. The pre-registered run of 16 July 2026 corresponds to the configuration with SHA-256 hash:\par\smallskip\noindent{\footnotesize\texttt{62b26ad868d424ef5527c8cb7d826c818aa1ba5cebbc76c7bfe665062781f0ce}}

\section{Outlook: the elastodynamic obstruction and its time-discrete repair}\label{sec:outlook}

The identities above are properties of a minimum principle. Hamilton's
principle for elastodynamics is a stationarity principle, and the
difference is quantitative. Consider the semi-discrete system, that is, the system discretised in space and continuous in time, $M\ddot u + \Kff u = F(t)$, where $M$ denotes the SPD mass matrix produced
by the same discretisation and $F(t)$ the time-dependent load. The action
functional over a time interval of length $T$ is
\[
S(u) \;=\; \int_0^{T}
\Bigl(\tfrac12\,\dot u^{\!\top} M \dot u
\;-\; \tfrac12\, u^{\!\top}\Kff\, u
\;+\; F(t)^{\!\top} u\Bigr)\,\mathrm{d}t,
\]
and the true trajectory is a stationary point of $S$. The second variation
of $S$ along the trajectory is positive semidefinite only if
$T \le \pi/\omega_{\max}$, where $\omega_{\max}$ denotes the largest
natural angular frequency of the system; that is, $\omega_{\max}^2$ is the
largest eigenvalue of the generalised eigenvalue problem
$\Kff v = \omega^2 M v$. This threshold equals one half of the shortest
natural period. On a mesh with characteristic element size $h$ one has
$\omega_{\max} = O(h^{-1})$, so the horizon on which the action is convex
along the trajectory is $O(h)$ and shrinks under mesh refinement; in the
continuum limit, the true trajectory is a saddle point of the action for
every $T > 0$ \cite{GelfandFomin1963}. Direct minimisation of the action
is therefore unsound as a training signal at the level of principle, not
of implementation.

Two exact repairs exist. The first is to discretise time before invoking a
variational principle. Allow, in addition, a damping term, so that the equation of motion reads $M\ddot u + C\dot u + \Kff u = F(t)$, where $C$ is a symmetric positive semidefinite damping matrix; the undamped case $C = 0$ is included. Every implicit step of the Newmark-$\beta$ family \cite{Newmark1959} with $\beta > 0$ then solves a linear system $\tilde A\,u = \tilde F$ with effective operator
\[
\tilde A \;=\; \Kff + \tfrac{1}{\beta\,\Delta t^{2}}\,M
+ \tfrac{\gamma}{\beta\,\Delta t}\,C,
\]
where $\Delta t$ denotes the time step and $\beta$ and $\gamma$ are the parameters of the integrator. The generalised-$\alpha$ family \cite{ChungHulbert1993} leads to an effective operator of the same form, namely a positive linear combination of $\Kff$, $M$ and $C$, and the conclusions below apply to it unchanged. The operator $\tilde A$ is SPD, and the
right-hand side $\tilde F$ is assembled from the state at the previous
step and the external load. Lemma~\ref{lem:one} therefore applies verbatim
at every step, in the norm induced by $\tilde A$, with assembly-level
evaluation. At the level of the effective operator, $\tilde A \to \Kff$ as $\Delta t \to \infty$, and the static problem is recovered provided the history-dependent effective load is treated consistently. This is the incremental-variational route, in the
tradition of variational integrators and incremental updates
\cite{MarsdenWest2001,OrtizStainier1999}. The second repair is to remain
in continuous time and to accept an additional cost. Space--time minimum
principles of the Brezis--Ekeland--Nayroles type
\cite{BrezisEkeland1976,Nayroles1976}, and elliptic regularisations of the
action whose convergence is the content of a conjecture of De Giorgi
proved by Serra and Tilli \cite{SerraTilli2012}, restore minimality at the
cost of evaluating residuals in a dual norm, which requires the
application of an operator inverse. The static energy avoids this cost
through the algebraic cancellation of Remark~\ref{rem:c2}. In this precise
sense, the static case is the degenerate case in which the dual-norm
evaluation reduces to an assembly-level computation, and the time-discrete
route inherits this property unchanged. The corresponding executable falsification checks for the dynamic identities are future work and will be pre-registered
separately.

\section{Conclusion}

For linear elastostatics on a fixed reference discretisation, the assembled
discrete potential energy is an exact label-free training objective:
minimising it is supervised regression in the stiffness norm towards the
solver's own solution, with identical gradients at every point. The
consequences developed here are the $\kappa$-mediated control of the
displacement error and the modewise contraction identity, which together
make the energy gap the primary evaluation metric; the Chebyshev guarantee
for conjugate-gradient post-processing; and a conditional latent-separation
bound on a shared operator, whose descriptor-free cross-geometry extension
is refuted by an explicit counterexample. Exactness is a property of the
objective, not of optimisation, expressivity, or generalisation, and every
numeric claim above was checked by the executable tests of
Section~\ref{sec:pass}. What remains open is stated deliberately:
nonlinear constitutive behaviour, and the elastodynamic identities of
Section~\ref{sec:outlook}, whose executable falsification checks will be
pre-registered separately.

\end{document}